\newcommand\Min{\mathrm{min}}
\newcommand\Max{\mathrm{max}}
\DeclareMathAlphabet{\mathpzc}{OT1}{pzc}{m}{it}
\newtheorem{theorem}{Theorem}
\newtheorem{lemma}{Lemma}
\newtheorem{proposition}[theorem]{Proposition}
\newcommand{\ca}{\mathcal{A}}
\newcommand{\ci}{\mathcal{I}}
\newcommand{\h}{\mathpzc{h}}
\newcommand{\D}{\mathpzc{D}}
\newcommand{\CO}[1]{\mathcal{O}(#1) }
\newcommand{\comment}[1]{}
\title{Enabling Quantum Speedup of Markov Chains using a Multi-level Approach}
\author{Xiantao Li\\
The Pennsylvania State University, University Park, Pennsylvania, 16802, USA\\
Xiantao.Li@psu.edu
}
\begin{document}
\maketitle
\begin{abstract}
Quantum speedup for mixing a Markov chain can be achieved based on the construction of a slowly-varying $r$ Markov chains where  the initial chain can be easily prepared and the spectral gaps have uniform lower bound. The overall complexity is proportional to $r$. We present a multi-level approach to construct such sequence of $r$ Markov chains by varying a  resolution parameter $\h.$  We show that  the density function of a low-resolution Markov chain can be used to warm start the Markov chain with high resolution.  We prove that in terms of the chain length the new algorithm has $O(1)$ complexity  rather than $O(r).$
\end{abstract}

\section{Introduction}

Markov chains play a central role in modern data science algorithms \cite{ching2013markov}, e.g., data assimilation and forecasting, uncertainty quantification, machine learning, stochastic optimization etc. One important scenario is when the mixing properties of Markov chains are used to sample statistical quantifies of interest with respect to the equilibrium density $\pi$. However, simulating a  Markov chain for this purpose can be a computationally demanding task, due to the dimensionality and mixing time. The seminal works Aharonov et al. and Szegedy \cite{aharonov2001quantum,szegedy2004quantum} put forward a paradigm to simulate Markov chains using quantum walks. 
A particular emphasis in this line of works \cite{wocjan2008speedup,richter2007almost,magniez2007search,dunjko2015quantum, wocjan2021szegedy} has been placed on the speedup in terms of the spectral gap $\delta$, from the classical $\frac{1}{\delta}$ complexity to $\frac{1}{\sqrt{\delta}}$, without the unfavorable dependence on $\pi_*=\min_{x}\pi(x)$.  
An important approach for achieving this quadratic speedup is by constructing a sequence of Markov chains, $P_0, P_1, \cdots, P_r$, for which the stationary distribution of two successive Markov chains are close  \cite{aharonov2003adiabatic,wocjan2008speedup,somma2008quantum}. Assuming that the initial chain $P_0$ can be easily mixed, the overall complexity is $\mathcal{O}\left(\frac{r}{\sqrt{\delta}}\right),$ times the complexity of implementing each quantum walk.
This approach has been combined with the Chebyshev cooling schedule, together with the quantum mean estimation to compute expected values and partition functions \cite{montanaro2015quantum}, which yields the sampling complexity $\mathcal{O}\left(\frac{r}{\epsilon\;\sqrt{\delta}}\right)$ . 
One important example of slowly varying Markov chains is the Gibbs distribution parameterized by the inverse temperature, which is the primary mechanism behind simulated annealing algorithms. Although the quadratic speedup has been envisioned to be a generic property, explicit construction of the sequence of slowly varying Markov chains with uniformly bounded spectral gaps is still an open issue.

This paper presents an alternative approach to construct multiple Markov chains. We consider the discretization of a Markov chain with continuous (infinite-dimensional) state space, which stem from general deterministic or stochastic dynamical systems. The discretization, using Ulam-Galerkin projection \cite{ulam1960collection}, lumps states into finitely many bins with bin size $\h$. The novel aspect is that we can construct multiple Markov chains $P_\h$ by varying $\h$. When $\h$ is large, e.g., $\h=\h_\Max$, the state space of the Markov chain is small, $P_\h$ can be mixed quickly using either a classical  or a quantum algorithm. On the other hand, when $\h$ is small, e.g., $\h=\h_\Min$, the continuous Markov chain is well approximated by $P_\h$. To some extent, this removes the assumption in the framework of multiple Markov chains \cite{wocjan2008speedup} on the preparability of the initial chain.  We show that by varying $\h$, e.g., from $2\h$ to $\h$ at each stage, the density functions of the Markov chains  at two successive levels have significant overlap, thus enabling a smooth transition. 
Our main finding (Theorem 1) is that simulating the sequence of such multiple Markov chains $\{P_\h\}_{\h_\Max \geq \h \geq h_\Min}$ has a cost that is comparable to simulating the  Markov chain $P_{\h_\Min}$, as if it had a warm start. 

Although our approach is constructed from a finite-dimensional approximation of a Markov chain with infinite state space, the same methodology can be applied directly to certain finite Markov chains, especially those that have been treated by  multigrid methods \cite{de2010smoothed}. 
\medskip

\emph{Problem Setup. ---}  We consider a Markov chain $\{X_n\}_{n\geq 0}$ with state space $S=\mathbb{R}^d$ and the (right) transition density $K(x,y)$, with $K(x,dy)$ indicating the probability that, given $x$, the Markov chain moves to state $y$ at the next step, which can be described in terms of  the conditional expectation of any statistical quantity $f(\cdot)$, i.e., 
$\mathbb{E}[f(X_{n+1})|X_n]=\int K(X_n,y) f(y) dy.$ An alternative description of the Markov chain is the Chapman-Kolmogorov (CK) equation for the change of the probability density function (PDF) from step $n$ to $n+1,$
\begin{equation} \label{eq: chkol}
    p_{n+1} = \int_{\mathbb{R}^d} p_{n}(x) K(x, y)  dx, 
\end{equation}
which is convenient in  the study of mixing properties.  The relation in Eq. \eqref{eq: chkol} is often written in a matrix/vector multiplication form $p_{n+1}^T= p_n^T K$. In particular, a stationary density $p(x)$ is the left eigenvector associated with the eigenvalue $1$: $p^T= p^T K$. 

\medskip 
\noindent\emph{Problem: } Given a precision $\epsilon,$ find a finite-dimensional approximation $p_\h(x)$ of the stationary PDF $p$ with $\h$ indicating a numerical parameter, such that, $\norm{p - p_\h}_1 < \epsilon. $

\medskip 
\emph{Ulam-Galerkin projection. --- } Many existing quantum algorithms work with a finite-dimensional form of the CK equation \eqref{eq: chkol}. To approximate a Markov chain with continuous state space and implement the algorithms on quantum computers, we have to quantize the problem. This is done in two steps: First we consider a large domain $\D$, where the probability of reaching states outside $\D$ is negligible. This is a reasonable assumption if we consider the states close to equilibrium: since the PDF integrates to 1, the probability in the far field is negligible.  For simplicity of the presentation, we choose $\D= [-1,1]^{\otimes d}$. In the second step, we can introduce a partition of $\D$ with uniform spacing  $\h$, $(\h\coloneqq \frac{1}{N}$ for some $N \in \mathbb{N}$),
\begin{equation}
    \D = \bigcup_{\bm j\in N(\h) }  \D_{\bm j }(\h), N(\h):= \Big\{\bm j: -N\leq j_k < N, \forall  k\in [d]\Big\}.
\end{equation}
where the subdomain is given by,
\begin{equation}
   \D_{\bm j}(\h)=  \bigotimes_{k=1}^d \big[j_k  \h, (j_k+1)  \h\big).
\end{equation}

With this partition, the continuous states have been grouped into small non-intersecting bins with bin size $\h^d$, amounting to a finite state space,
\begin{equation}\label{Sh}
    S_\h \coloneqq \{  \D_{\bm j }(\h)| \bm j\in N(\h) \}, \quad \abs{S_\h}= \left(\frac{2}{\h}\right)^d.
\end{equation}
For brevity, we simplify refer to one such state in $S_\h$ by $\bm j.$ 
Associated with the partitions of the domain is a discretization of a PDF.  This can be done by  integrating the PDF $p(x)$ over each bin:
\begin{equation}\label{lump}
    \pi_\h(\bm j)= \int_{\Omega_{\bm j }(\h)} p(x) dx.
\end{equation}  

Meanwhile, the discrete probability can be mapped back to a continuous one by piecewise constant interpolation, 
\begin{equation}\label{interp}
    p_\h(x) = \sum_{\bm j} \pi_\h ({\bm j}) \chi_{\bm j}(x). 
\end{equation}
Here $\chi_{\bm j}(x)$ is the indicator function for the domain $\Omega_{\bm j }(\h)$: $\chi_{\bm j}(x)=\h^{-d},$ if $x\in \Omega_{\bm j }(\h)$, and zero otherwise. We denote such class of functions by $\Delta_\h.$

\emph{Approximation properties. --- } Clearly the function $p_\h$ defined in \eqref{interp} is also a PDF.  In addition, \eqref{interp} indicates a one-to-one correspondence between the vector $\bm \pi_\h\in \mathbb{R}^{\abs{S_\h}}$ and a piecewise constant function $p_\h$ in $\Delta_\h.$ We will write the relation in \eqref{lump} as $\bm \pi_\h= \ca^\h p(x)$ with $\ca^\h$ representing an averaging operator, and  \eqref{interp} as an interpolation $p_\h= \ci_\h  \bm \pi_\h$.
The $L_1$ norm of $p_\h$ also coincides with the vector $1$-norm of $\pi_\h,$ so we simply 
denote this norm by $\norm{\cdot}_1.$ We first provide an error bound for an approximation of $p(x)$ using piecewise constant functions  \cite{devore1993constructive}, i.e., those in $\Delta_\h$.

\begin{lemma}\label{lmm: errbd}
Assume that $p(x)$ is Lipschitz continuous with Lipschitz constant $\Lambda.$ 
\[\abs{ p(x) - p(y) } \leq \Lambda \norm{x-y}_{\infty}.\]
When $p(x)$ is approximated by \eqref{interp} with coefficients from \eqref{lump}, 
the following estimate holds, 
\begin{equation}
   \norm{ p(x) -  p_\h(x)}_1 \leq \Lambda \h, \quad  \norm{ p(x) -  p_\h(x)}_\infty \leq \Lambda \h.
\end{equation}
\end{lemma}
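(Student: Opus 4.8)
\noindent\emph{Proof sketch. ---} The plan is to exploit the one structural fact that makes \eqref{interp}--\eqref{lump} an \emph{averaging} operator: on each bin $\Omega_{\bm j}(\h)$ the function $p_\h$ is constant, and that constant is exactly the cell mean
\[
  p_\h(x) = \h^{-d}\int_{\Omega_{\bm j}(\h)} p(y)\,dy , \qquad x\in\Omega_{\bm j}(\h),
\]
which one sees by substituting \eqref{lump} into \eqref{interp} and using $\chi_{\bm j}\equiv\h^{-d}$ on $\Omega_{\bm j}(\h)$. Since $\h^{-d}\int_{\Omega_{\bm j}(\h)}dy = 1$, for $x$ in that bin the error can be written as an average of increments,
\[
  p(x) - p_\h(x) = \h^{-d}\int_{\Omega_{\bm j}(\h)} \bigl(p(x) - p(y)\bigr)\,dy .
\]

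The first step is the $L_\infty$ estimate. I would move the absolute value inside the integral and apply the Lipschitz hypothesis: for $x,y$ in the common cube $\Omega_{\bm j}(\h)$, which has edge length $\h$, one has $\norm{x-y}_\infty \le \h$, hence $\abs{p(x)-p(y)} \le \Lambda\h$ and therefore $\abs{p(x)-p_\h(x)} \le \Lambda\h$ for every $x\in\D$. This already gives $\norm{p - p_\h}_\infty \le \Lambda\h$, using nothing beyond continuity of $p$ and the width of a single bin.

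The second step is the $L_1$ estimate, obtained by integrating the pointwise bound bin by bin, $\norm{p - p_\h}_1 = \sum_{\bm j}\int_{\Omega_{\bm j}(\h)}\abs{p-p_\h}\,dx$. The crude bound $\abs{p - p_\h}\le\Lambda\h$ already yields $\norm{p-p_\h}_1 \le \Lambda\h\,\abs{\D}$; since $p_\h$ is the cell mean the per-bin error is mean-zero, so a Poincaré-type inequality on a cube of side $\h$ shrinks each contribution below the crude value. The only place I expect to need genuine care --- and essentially the sole obstacle, since nothing here is deep --- is tracking the exact constant: the naive argument carries the domain volume $\abs{\D}$, so to land on precisely $\Lambda\h$ as stated one must either absorb that normalization or use that $p$ integrates to one (so its mass, and hence the error, is effectively concentrated). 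In any event the estimate is linear in $\h$, which is all that the accuracy requirement $\norm{p-p_\h}_1<\epsilon$ subsequently uses.
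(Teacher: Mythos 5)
Your argument is essentially the paper's own proof: on each bin $\D_{\bm j}(\h)$ you write $p_\h$ as the cell mean, pull the difference inside the integral, and use the Lipschitz hypothesis with $\norm{x-y}_\infty \le \h$ to get the pointwise bound $\abs{p(x)-p_\h(x)} \le \Lambda\h$, then integrate for the $L_1$ estimate. The domain-volume factor $\abs{\D}=2^d$ you worry about in the $L_1$ step is not handled by the paper either --- its proof also passes directly from the pointwise bound to $\norm{p-p_\h}_1\le\Lambda\h$, implicitly absorbing that constant --- and note that neither a Poincar\'e inequality nor the normalization of $p$ actually removes it, so your version is no weaker than (and in its bookkeeping more candid than) the published argument.
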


 By applying the same procedure to the  transition kernel $K(x,y)$, we obtain a matrix approximation, 
\begin{equation}\label{Ph}
    P_\h(\bm i, \bm j)= \int_{ \Omega_{\bm i}(\h)} \int_{ \Omega_{\bm j}(\h)} K(x,y) dy dx.
\end{equation}
One immediate observation is that $P_\h$ is a finite stochastic matrix, implying that there exists an invariant density $\bm \pi_\h$ (resp. $p_\h$).  Namely, all entries of $P_\h$ are non-negative and the row sum is one.  The stochastic matrix $ P_\h(\bm i, \bm j)$ has a one-to-one correspondence with the continuous transition kernel,
\begin{equation}
    K_\h(x,y) = \sum_{\bm i, \bm j}   P_\h(\bm i, \bm j) \chi_{\bm i}(x)\chi_{\bm j}(y).
\end{equation}

This reduction procedure using piecewise constant interpolation is known as the Ulam-Galerkin projection \cite{ulam1960collection}. Properties of the resulting Markov chains with finite state spaces were part of Ulam's conjectures, some of which have been addressed for several cases \cite{li1976finite,murray1997discrete,ding1996finite,froyland1998approximating}.  Of particular importance to the current context are those results that assert that the density function
$p_\h$ of $K_\h$ converges to the density $p$ of $K$ with rate $\h \log \frac{1}{\h}. $

Quantum algorithms will work with the discrete probability $\{\pi_\h ({\bm j})  \}$. The standard approach is to encode the PDF coherently into a quantum state,
\begin{equation}\label{quan-st}
    \ket{\pi_\h} = \sum_{\bm j} \sqrt{\pi_\h ({\bm j})} \ket{\bm j},
\end{equation}
where the index $\bm j$ is mapped to a computational basis $\ket{\bm j}.$ 
Meanwhile, the transition matrix can be encoded in a unitary operator, either via a Hamiltonian operator \cite{aharonov2003adiabatic,childs2009universal} or quantum walks built from reflections \cite{szegedy2004quantum,wocjan2008speedup}.

By varying the resolution parameter $\h$, we obtain a sequence of Markov chains,
\[\{ P_\mathpzc{h}: \mathpzc{h}_\text{max} \geq  \mathpzc{h} \geq \mathpzc{h}_\text{min} \}. \]
The simplest progression is done by reducing $\h$ by a factor of 2 at each stage. Namely, $\mathpzc{h}_\text{min} = 2^{-r} \mathpzc{h}_\text{max}$. In light of the convergence rate of the Ulam-Galerkin projection, we have 
$   r = \CO{\log \frac{1}{\epsilon}},$
in order for the descritization error to be within $\epsilon.$

\emph{The multilevel approach. --- }The ability to vary $\h$, thus introducing a slow transition from a Markov chain with small state space to one with a larger state space,  suggests a new approach to build a sequence of Markov chains. Specifically, one can start with the equilibrium PDF $\pi_{2\h}$ computed from the Markov chain $P_{2\h}$, followed by an interpolation to prepare the next Markov chain $P_\h$. The overall procedure is  outlined, starting with $ \h=\h_\text{max} $, as follows
{\small 
\begin{equation}\label{diag}
\ket{\pi_\h} \overset{Interp.}{\xrightarrow{\hspace*{1cm}}} \ci_{\h}^{\frac{\h}{2}} \ket{\pi_{{\h}}}  \overset{Q.~ Walks}{\xrightarrow{\hspace*{1cm}}} \ket{\pi_{\frac{\h}{2}}} \overset{Interp.}{\xrightarrow{\hspace*{1cm}}} \ci_{\frac{\h}{2}}^{\frac{\h}{4}} \ket{\pi_{\frac{\h}{2}}} \cdots \overset{Interp.}{\xrightarrow{\hspace*{1cm}}}
\ci_{2\h_\text{min}}^{\h_\text{min}} \ket{\pi_{2\h_\text{min}}}  \overset{Q.~ Walks}{\xrightarrow{\hspace*{1cm}}} \ket{\pi_{\h_\text{min}} }.
\end{equation}
    }

It is worthwhile to point out that our approach departs from the prior works \cite{wocjan2008speedup} in that we work with a sequence of Markov chains with varying state spaces.  When $\mathpzc{h}$ is large, the dimension of the state space is much smaller, which can be treated much more efficiently, thus  relaxing the preparability assumption on the initial chains. In addition, although the lumped Markov chain $P_\h$ shows resemblance with the Markov chain on a lattice considered in \cite{richter2007almost}, $P_\h$ in the current case is not necessarily symmetric and the equilibrium PDF may not be uniform. 

In order to carry out this program, we define unitary operators that map density functions at successive levels. Specifically, following \eqref{lump}, we let  $\ca_\h^{2\h}$ be the ``averaging operator" that reduce PDFs on level $\h$ to PDFs on level $2\h$. Similarly, we define   $\ci_{2\h}^{\h} p_{2\h},$ as the ``interpolation operator", which is similar to \eqref{interp}.

\begin{figure}[hpt]
    \centering
    \includegraphics[scale=0.23]{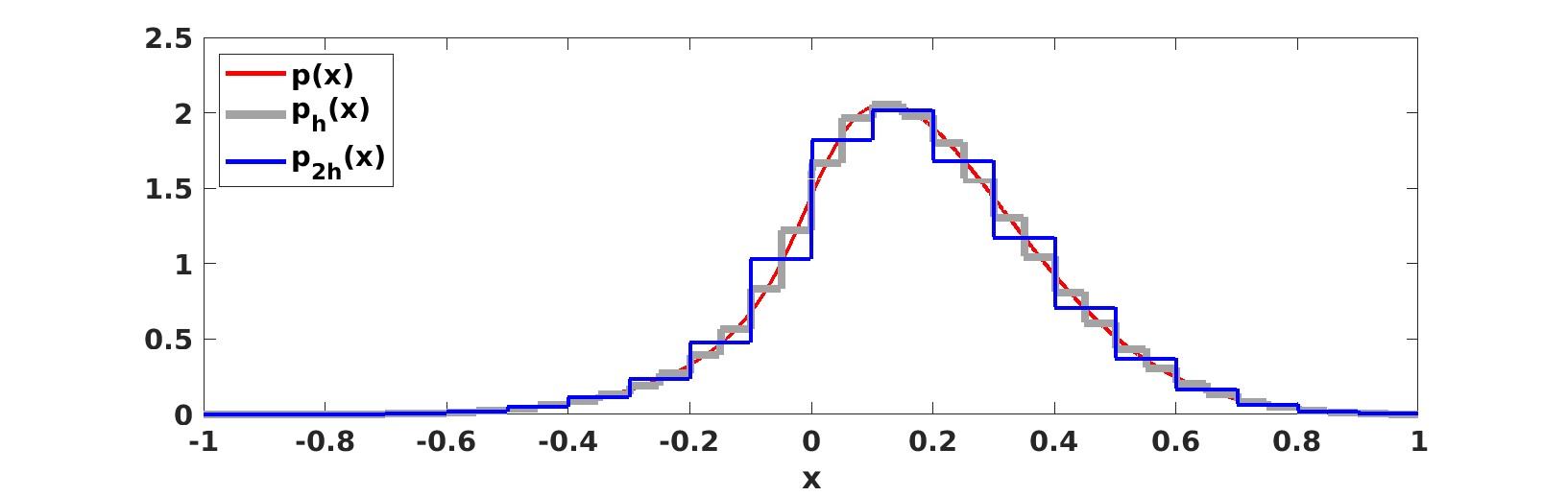}
    \caption{An illustration of the approximate density $p_\h$ and $p_{2\h}$ for a PDF $p(x).$}
    \label{fig:my_label}
\end{figure}

Figure \ref{fig:my_label} provides an illustration of the PDFs at the levels $\h$ and ${2\h}$.  The following properties can be directly established.
\begin{lemma}
  $\ca_\h^{2\h}$ is the operator adjoint of $\ci_{2\h}^\h$ and is proportional to the left inverse of $\ci_{2\h}^\h$: For any $\bm{v}_\h$ (in $ \mathbb{R}^{\abs{S_\h}}$)
 and $\bm v_{2\h}$ (in $ \mathbb{R}^{\abs{S_{2\h}}}$), the following identifies hold
 \begin{equation}\label{aiia}
 \big(\bm v_\h, \ci_{2\h}^{\h} \bm v_{2\h}\big)= \big(\ca_{\h}^{2\h} \bm v_\h, \bm v_{2\h}\big), \quad 
      \ca_{\h}^{2\h} \ci_{2\h}^{\h} \bm v_{2\h}= 2^d \bm v_{2\h},  
  \end{equation}  
where the parenthesis indicate standard inner products.  Furthermore,  the stochastic matrices \eqref{Ph} at consecutive levels are related as follows 
\begin{equation}\label{2h2h}
{P}_{2\h}=\ca_{\h}^{2\h} P_{\h} \ci_{2\h}^\h. 
\end{equation}
\end{lemma}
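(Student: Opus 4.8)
The plan is to first unwind the two-level operators into explicit matrices, after which all three identities reduce to elementary bookkeeping. The geometric fact I would start from is that the partition at level $2\h$ coarsens the one at level $\h$ dyadically: a fine cell $\D_{\bm i}(\h)$ lies inside a coarse cell $\D_{\bm j}(2\h)$ precisely when $j_k=\lfloor i_k/2\rfloor$ for every coordinate $k$, and each coarse cell is the disjoint union of exactly the $2^d$ fine cells obtained this way, so that these fine cells tile $\D$. Reading off \eqref{lump}--\eqref{interp} at two scales, $\ci_{2\h}^\h$ is then the $0/1$ prolongation matrix with $\ci_{2\h}^\h(\bm i,\bm j)=1$ iff $\D_{\bm i}(\h)\subset\D_{\bm j}(2\h)$ (one nonzero per row, $2^d$ nonzeros per column, and columns with pairwise disjoint supports), while $\ca_\h^{2\h}$ is exactly its transpose, summing the $2^d$ fine entries lying in each coarse cell. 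I would spend one sentence pinning down the index ranges $N(\h)$ versus $N(2\h)$ so that $\bm i\mapsto\lfloor\bm i/2\rfloor$ is well defined and surjective; the factors $2^d$ appearing below are entirely a consequence of this explicit form.

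Given these forms, the first line of \eqref{aiia} is immediate: $\big(\bm v_\h,\ci_{2\h}^\h\bm v_{2\h}\big)=\bm v_\h^{T}\ci_{2\h}^\h\bm v_{2\h}=\big((\ci_{2\h}^\h)^{T}\bm v_\h\big)^{T}\bm v_{2\h}=\big(\ca_\h^{2\h}\bm v_\h,\bm v_{2\h}\big)$, i.e.\ $\ca_\h^{2\h}=(\ci_{2\h}^\h)^{T}$ is the adjoint. For the second line, $\ca_\h^{2\h}\ci_{2\h}^\h=(\ci_{2\h}^\h)^{T}\ci_{2\h}^\h$ is the Gram matrix of the columns of $\ci_{2\h}^\h$; distinct columns are orthogonal because their supports are disjoint, and each column has squared norm $2^d$ since it is a $0/1$ vector with exactly $2^d$ ones, hence $\ca_\h^{2\h}\ci_{2\h}^\h=2^d I$, which is the claimed left-inverse statement up to the factor $2^d$.

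For \eqref{2h2h} I would compute $\ca_\h^{2\h}P_\h\ci_{2\h}^\h$ entrywise. Using $\ca_\h^{2\h}=(\ci_{2\h}^\h)^{T}$, the $(\bm j,\bm k)$ entry equals $\sum_{\D_{\bm i}(\h)\subset\D_{\bm j}(2\h)}\sum_{\D_{\bm l}(\h)\subset\D_{\bm k}(2\h)}P_\h(\bm i,\bm l)$; substituting the definition \eqref{Ph} of $P_\h(\bm i,\bm l)$ as the double integral of $K$ over $\D_{\bm i}(\h)\times\D_{\bm l}(\h)$, the finite sum of integrals over the disjoint fine cells tiling $\D_{\bm j}(2\h)$ and $\D_{\bm k}(2\h)$ collapses, by additivity of the integral, to $\int_{\D_{\bm j}(2\h)}\int_{\D_{\bm k}(2\h)}K(x,y)\,dy\,dx=P_{2\h}(\bm j,\bm k)$, again by \eqref{Ph}. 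Alternatively, one can observe that $P_\h$ is, up to a fixed scaling, $\ca^\h\mathcal{K}\ci_\h$ for the integral operator $\mathcal{K}$ with kernel $K$ and push the identity through the composition $\ci_{2\h}^\h=\ca^\h\ci_{2\h}$ together with the analogous factorization of $\ca_\h^{2\h}$, but the direct computation is shorter and self-contained.

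There is no genuine analytic obstacle in any of this; the one thing that actually needs care is the two $2^d$ bookkeeping factors, which is precisely why I would insist on writing $\ci_{2\h}^\h$ and $\ca_\h^{2\h}$ as explicit $0/1$ matrices rather than arguing abstractly from the words ``interpolation'' and ``averaging,'' and on checking carefully that $\bm i\mapsto\lfloor\bm i/2\rfloor$ correctly encodes cell nesting on the prescribed index sets. The interchange of summation and integration in the last step is trivial, involving only finitely many integrals over pairwise disjoint domains.
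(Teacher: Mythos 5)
Your proof is correct and follows essentially the same route as the paper: the adjointness and the $2^d$ left-inverse identity come from the same bin-by-bin bookkeeping (the paper groups the fine-level inner product by coarse cells where $\ci_{2\h}^{\h}\bm v_{2\h}$ is constant, which is your transpose/Gram-matrix computation made explicit). For \eqref{2h2h} the paper uses the compact factorization $P_\h=\ca^\h K(\ca^\h)^\dagger$ together with $\ca^{2\h}=\ca_\h^{2\h}\ca^\h$, i.e.\ exactly the alternative you mention, while your direct entrywise summation of the integrals in \eqref{Ph} over the nested cells is an equivalent, equally valid unfolding of that argument.
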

The last equation \eqref{2h2h} provides an important vehicle to move across Markov chains at different levels. 

\bigskip 
Next, we state the assumptions under which the multilevel procedure has provable performance.

\noindent{\emph{Main Assumptions: Each finite Markov chain $P_\h$ is reversible with stationary PDF, given by $\bm \pi_\h$ (or $p_\h$ according to \eqref{interp}). The spectral gaps   $\delta_\h$ of the Markov chains  are uniformly bounded, in the sense that 
there exists a constant $\gamma$, such that $\frac{\delta_\h}{\delta_{2\h}}\leq  \gamma.$
 Further, each $p_\h(x)$ in the range of $K_\h$ has the following bounded variation property at the coarse level $2\h$,  
\begin{equation}
 \abs{p_\h(x) - p_\h (x')} \leq \Lambda \h, \forall x, x' \in \D_{\bm j}(2\h)
\end{equation}
}}

The uniqueness of the stationary density $p_\h$ has been proved in \cite{de2010smoothed} in the context of multigrid methods. The spectral gaps have a direct influence on the convergence rate of the Markov chain. For the Ulam-Galerkin projection, the uniform bound for the spectral gaps have  been verified for certain stochastic dynamics, e.g., \cite[Cor. 3.5.3]{murray1997discrete}. 

We first show that the uniformity parameter $\gamma$ for the spectral gaps is $\mathcal{O}(1)$, i.e,  $\delta_{2\h} \approx \delta_\h.$. To this end, we use the notion of coefficient $\tau$ of ergodicity by Seneta \cite{seneta1993sensitivity} (which is related to the spectral gap by $\delta= 1 - \tau$),
\begin{equation}\label{eq: tau}
    \tau_\h = \max_{ (\bm \pi_\h, \mathbf{1}_\h)=0 } \frac{ \norm{P_h \bm \pi_\h}_1 }{\norm{\bm \pi_\h}_1}.
\end{equation}
Here $\mathbf{1}_\h$ is the vector with values being all ones.

\begin{lemma}
 Under the assumption above on the bounded variation, one has  $\abs{\tau_\h -\tau_{2\h}} \leq \Lambda \h$.
\end{lemma}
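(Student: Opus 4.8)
The plan is to prove two one-sided bounds, $\tau_{2\h}\le\tau_\h$ and $\tau_\h\le\tau_{2\h}+\Lambda\h$; since the first gives $\tau_\h-\tau_{2\h}\ge0$, together they yield $\abs{\tau_\h-\tau_{2\h}}\le\Lambda\h$. Throughout I normalize the pair $\ca_\h^{2\h},\ci_{2\h}^\h$ so that $\ci_{2\h}^\h$ preserves total mass, as in \eqref{interp}, and \eqref{2h2h} holds; then $\ci_{2\h}^\h$ is a $1$-norm isometry carrying zero-sum vectors to zero-sum vectors, $\ca_\h^{2\h}$ is $\ell_1$-nonexpansive and preserves sums, and $\Pi:=\ci_{2\h}^\h\ca_\h^{2\h}$ is the idempotent that replaces a fine-level vector by its average over each coarse cell $\D_{\bm j}(2\h)$.

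For the first bound, let $(\bm v,\mathbf 1_{2\h})=0$; then $\ci_{2\h}^\h\bm v$ is again zero-sum with the same $1$-norm, so \eqref{2h2h} and the nonexpansiveness of $\ca_\h^{2\h}$ give
\[
 \norm{P_{2\h}\bm v}_1=\norm{\ca_\h^{2\h}P_\h\ci_{2\h}^\h\bm v}_1\le\norm{P_\h\ci_{2\h}^\h\bm v}_1\le\tau_\h\norm{\ci_{2\h}^\h\bm v}_1=\tau_\h\norm{\bm v}_1,
\]
and taking the supremum over $\bm v$ proves $\tau_{2\h}\le\tau_\h$; no use is made of the bounded-variation hypothesis here. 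For the second bound, fix a maximizer $\bm u$ of \eqref{eq: tau}, so $(\bm u,\mathbf 1_\h)=0$, $\norm{\bm u}_1=1$, $\norm{P_\h\bm u}_1=\tau_\h$. The function represented by $P_\h\bm u$ lies in the range of $K_\h$, so the bounded-variation assumption — applied to the positive and negative parts of $\bm u$, summed over the $\CO{\h^{-d}}$ coarse cells in the spirit of Lemma~1, with the resulting dimension factor absorbed into $\Lambda$ — gives $\norm{P_\h\bm u-\Pi P_\h\bm u}_1\le\Lambda\h$, whence $\tau_\h=\norm{P_\h\bm u}_1\le\norm{\ca_\h^{2\h}P_\h\bm u}_1+\Lambda\h$. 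Then \eqref{2h2h} allows the splitting
\[
 \ca_\h^{2\h}P_\h\bm u=P_{2\h}\,\ca_\h^{2\h}\bm u+\ca_\h^{2\h}P_\h(\bm u-\Pi\bm u),
\]
in which $\ca_\h^{2\h}\bm u$ is zero-sum with $\norm{\ca_\h^{2\h}\bm u}_1\le1$, so $\norm{P_{2\h}\ca_\h^{2\h}\bm u}_1\le\tau_{2\h}$. Hence $\tau_\h\le\tau_{2\h}+\norm{\ca_\h^{2\h}P_\h(\bm u-\Pi\bm u)}_1+\Lambda\h$, and everything reduces to bounding $\norm{\ca_\h^{2\h}P_\h(\bm u-\Pi\bm u)}_1$ by $\Lambda\h$.

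This last estimate is the crux. The vector $\bm u-\Pi\bm u$ is a pure fine-scale oscillation — it has zero mass on every coarse cell — and the estimate says exactly that $P_\h$ almost annihilates such oscillations; equivalently, that the rows of $P_\h$ belonging to a common coarse cell are mutually $\Lambda\h$-close in $1$-norm, i.e., the transition law has bounded variation in its \emph{source} variable, whereas the stated hypothesis controls only the output variable. I would deduce this from reversibility, $\pi_\h(\bm i)P_\h(\bm i,\bm j)=\pi_\h(\bm j)P_\h(\bm j,\bm i)$, together with the observation that $\bm\pi_\h$, being a fixed point of $K_\h$ and hence in its range, is itself constant on each coarse cell up to $\Lambda\h$: detailed balance trades a perturbation of the source index $\bm i$ within a coarse cell for a perturbation of the output index in the transposed entries $P_\h(\bm j,\cdot)$, which the output-variable bounded variation controls, multiplied by a ratio $\pi_\h(\bm j)/\pi_\h(\bm i)$ that is $1+\CO{\Lambda\h}$ on a coarse cell. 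I expect this transfer to be the main obstacle: performed bluntly it leaves a spurious factor $1/\pi_*$ (the bound becoming $\CO{\Lambda\h/\pi_*}$), so the estimates have to be arranged so that this cancels, or else one must add a mild Lipschitz hypothesis on $K$ in its source argument, consistent with the Lipschitz assumption already imposed on $p$ in Lemma~1. Granting $\norm{\ca_\h^{2\h}P_\h(\bm u-\Pi\bm u)}_1\le\Lambda\h$, the inequalities above chain to $\tau_\h\le\tau_{2\h}+2\Lambda\h$, which after renaming $\Lambda$ gives the lemma.
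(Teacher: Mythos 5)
Your first half is correct, and it is in fact cleaner and sharper than what the paper does for that same direction. The paper's own proof takes the coarse maximizer $\bm\pi_{2\h}$ of \eqref{eq: tau}, uses $\ci_{2\h}^{\h}\bm\pi_{2\h}$ as a competitor for $\tau_\h$, and then applies the bounded-variation bound \eqref{iah} of Lemma \ref{lmm: IA} to $P_\h\ci_{2\h}^{\h}\bm\pi_{2\h}$ together with \eqref{2h2h} and the norm preservation of $\ci_{2\h}^{\h}$, arriving at $\tau_\h\geq\tau_{2\h}+\CO{\h}$, i.e.\ the one-sided comparison $\tau_{2\h}\leq\tau_\h+\CO{\h}$. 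Your nonexpansiveness argument ($\ca_\h^{2\h}$ an $\ell_1$ contraction, $\ci_{2\h}^{\h}$ an isometry carrying zero-sum to zero-sum, plus \eqref{2h2h}) yields the same comparison exactly, with no $\CO{\h}$ error and no use of the bounded-variation hypothesis; this is also the only direction the paper actually uses afterwards, when it deduces $\delta_{2\h}\geq\delta_\h+\CO{\h}$.

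The second half is where the genuine gap lies, and it is a gap you have flagged but not closed. Your reduction is fine: with $\bm u$ the fine-level maximizer, everything comes down to $\norm{\ca_\h^{2\h}P_\h(\bm u-\ci_{2\h}^{\h}\ca_\h^{2\h}\bm u)}_1\leq\Lambda\h$, and since $\bm u-\ci_{2\h}^{\h}\ca_\h^{2\h}\bm u$ has zero mass on every coarse cell, this estimate demands that the probability of jumping from a fine state $\bm i$ into a coarse cell $\D_{\bm J}(2\h)$ change only by $\CO{\h}$ as $\bm i$ moves within a coarse cell — regularity of $K$ in its \emph{source} argument. The stated assumption only controls functions in the range of $K_\h$, i.e.\ output-variable variation, so, as you observe, it does not give this directly; your reversibility route is only a heuristic at this stage, and the $1/\pi_*$ factor you anticipate is a real obstruction (the ratio $\pi_\h(\bm j)/\pi_\h(\bm i)$ is $1+\CO{\Lambda\h/\pi_*}$, not $1+\CO{\Lambda\h}$, unless the stationary density is bounded below uniformly in $\h$). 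Be aware that you cannot borrow the missing step from the paper: its displayed chain establishes only the one-sided bound above and then asserts the absolute-value statement, so the direction $\tau_\h\leq\tau_{2\h}+\Lambda\h$ is not proved there either. As written, your proposal rigorously gives $0\leq\tau_\h-\tau_{2\h}$ but not $\tau_\h-\tau_{2\h}\leq\Lambda\h$; to finish you must either carry out the detailed-balance transfer with the $1/\pi_*$ factor cancelled (e.g.\ by working with $\pi$-weighted norms), or add the source-Lipschitz hypothesis on $K$ that you mention, which would change the statement of the assumptions.
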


Let $\bm \pi_{2\h}$ be the vector  with $\norm{\bm \pi_{2\h}}_1=1 $ that achieves the maximum for the stochastic matrix $P_{2\h}$, i.e.,
$\tau_{2\h}= \norm{P_{2\h}\bm \pi_{2\h} }_1.$
Then, using the optimality condition \eqref{eq: tau}, we have 
\begin{align*}
    \tau_\h  \ge & \norm{P_h \ci_{2\h}^{\h} \bm \pi_{2\h}}_1 
      = \CO{\h} + \norm{\ci_{2\h}^{\h} \ca_{\h}^{2\h} P_h \ci_{2\h}^{\h} \bm \pi_{2\h}}_1\\
       = & \CO{\h} +\norm{\ci_{2\h}^{\h} P_{2h}  \bm \pi_{2\h}}_1
     =  \CO{\h} + \tau_{2\h} \Longrightarrow \abs{\tau_\h -\tau_{2\h}} \leq \Lambda \h. 
\end{align*}
In the second step, we used the Lemma \ref{lmm: IA} below, and the Big$-\mathcal{O}$ includes the constant $\Lambda.$ The third step is based on \eqref{2h2h} and the observation that $\ci_{2\h}^{\h}$  preserves the $L$-1 norm. As a result, $\delta_{2\h}=1-\tau_{2\h} \geq 1 - \tau_{\h} + \CO{\h}=\delta_{\h}+ \CO{\h}.$ This shows that in general the specral gap does not change significantly with $\h$.

\emph{The PDF Overlaps. --- } 
In our multilevel approach, the stationary density $p_{2\h}$ will be interpolated to approximate the stationary density $p_{\h}$, i.e., $p_{\h} \approx \ci_{2\h}^\h p_{2\h}$. We now show that these PDFs have significant overlap.

\begin{lemma}\label{lmm: IA}
Under the Assumption on the bounded variation, for each $p_\h \in \Delta_\h,$ the corresponding discrete density $\bm \pi_\h$ follows the bound,
\begin{equation}\label{iah}
    \norm{\ci_{2\h}^{\h}  \ca_{\h}^{2\h}  \bm \pi_{\h} - \bm \pi_\h}_1 \leq \Lambda  \h.
\end{equation}
Let $\bm \pi_\h$ and $\bm \pi_{2\h}$ be respectively the stationary density of $P_\h$ and $P_{2\h}$.  Then,
  \begin{equation}
      \norm{\bm \pi_\h - \ci_{2\h}^\h \bm \pi_{2\h} }_1  \leq C \frac{\h}{\delta_\h}. 
  \end{equation}
  Finally, the last inequality implies that the corresponding quantum states \eqref{quan-st} have an overlap with the following lower bound,
  \begin{equation}
  \braket{\pi_\h}{\ci_{2\h}^\h \pi_{2\h}} =
     1- q_\h, \quad \textrm{with}\;  q_\h  \leq C\frac{\h}{\delta_\h}.  
  \end{equation}
\end{lemma}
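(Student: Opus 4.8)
The plan is to prove the three nested claims of Lemma~\ref{lmm: IA} in order, since the second uses the first and the third uses the second; throughout, $P_\h$ denotes the one‑step map on densities, for which $\bm\pi_\h$ is the fixed point (the transpose of the stochastic matrix \eqref{Ph}). \emph{Step 1: the averaging bound \eqref{iah}.} Observe that $\ci_{2\h}^{\h}\ca_{\h}^{2\h}$ replaces, on every coarse cell $\D_{\bm i}(2\h)$, the piecewise‑constant function $p_\h$ by its mean over that cell; equivalently it sends $\bm\pi_\h$ to the vector that is constant over each block of $2^{d}$ fine cells, with value the block average. Hence each entry of $\ci_{2\h}^{\h}\ca_{\h}^{2\h}\bm\pi_\h-\bm\pi_\h$ is at most $\h^{d}$ (the cell volume) times the oscillation of $p_\h$ over the containing coarse cell, which the bounded‑variation hypothesis bounds by $\Lambda\h$; summing $\abs{\cdot}$ over the $\abs{S_\h}=(2/\h)^{d}$ entries gives $\norm{\ci_{2\h}^{\h}\ca_{\h}^{2\h}\bm\pi_\h-\bm\pi_\h}_1\le 2^{d}\Lambda\h$, i.e. $\le\Lambda\h$ after absorbing the dimensional constant. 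This is routine; the point worth recording is that the same estimate holds verbatim for any vector whose piecewise‑constant representative inherits the coarse‑level bounded variation --- in particular anything in the range of $K_\h$ --- which is exactly what the next step needs.

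\emph{Step 2: the stationary‑density bound.} Put $\tilde{\bm\pi}_\h:=\ci_{2\h}^{\h}\bm\pi_{2\h}$, which is again a probability vector since $\ci_{2\h}^{\h}$ is nonnegative and preserves the $L_1$ norm. Equation \eqref{2h2h} is the vehicle that carries the coarse stationarity down: feeding $\ca_{\h}^{2\h}P_\h\ci_{2\h}^{\h}=P_{2\h}$ into the stationarity relation for $\bm\pi_{2\h}$ and then interpolating back with $\ci_{2\h}^{\h}$ gives $\ci_{2\h}^{\h}\ca_{\h}^{2\h}\big(P_\h\tilde{\bm\pi}_\h\big)=\tilde{\bm\pi}_\h$. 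Since $P_\h\tilde{\bm\pi}_\h$ is a one‑step image it lies in the range of $K_\h$ and has the coarse‑level bounded variation, so Step~1 applies to it and yields $\norm{\tilde{\bm\pi}_\h-P_\h\tilde{\bm\pi}_\h}_1=\norm{\ci_{2\h}^{\h}\ca_{\h}^{2\h}(P_\h\tilde{\bm\pi}_\h)-P_\h\tilde{\bm\pi}_\h}_1\le\Lambda\h$: that is, $\tilde{\bm\pi}_\h$ misses being a fixed point of $P_\h$ by only $\Lambda\h$ in $L_1$. Now subtract the exact stationarity $\bm\pi_\h=P_\h\bm\pi_\h$: the error $\bm r:=\tilde{\bm\pi}_\h-\bm\pi_\h$ has zero coordinate sum and satisfies $\norm{(I-P_\h)\bm r}_1\le\Lambda\h$. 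Finally invoke the uniform gap through the ergodicity coefficient $\tau_\h=1-\delta_\h$ of \eqref{eq: tau}: the chain contracts mean‑zero perturbations, $\norm{P_\h\bm r}_1\le\tau_\h\norm{\bm r}_1$, so the reverse triangle inequality gives $\Lambda\h\ge\norm{(I-P_\h)\bm r}_1\ge(1-\tau_\h)\norm{\bm r}_1=\delta_\h\norm{\bm r}_1$, hence $\norm{\bm\pi_\h-\ci_{2\h}^{\h}\bm\pi_{2\h}}_1\le\Lambda\h/\delta_\h$, the claim with $C=\Lambda$.

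\emph{Step 3: the overlap bound.} By construction $\ket{\ci_{2\h}^{\h}\pi_{2\h}}$ is the amplitude encoding \eqref{quan-st} of the probability vector $\ci_{2\h}^{\h}\bm\pi_{2\h}$, so $\braket{\pi_\h}{\ci_{2\h}^{\h}\pi_{2\h}}=\sum_{\bm j}\sqrt{\pi_\h(\bm j)\,(\ci_{2\h}^{\h}\bm\pi_{2\h})(\bm j)}$ is the fidelity (Bhattacharyya coefficient) of the two distributions. As both are normalized, $q_\h=1-\braket{\pi_\h}{\ci_{2\h}^{\h}\pi_{2\h}}=\tfrac12\sum_{\bm j}\big(\sqrt{\pi_\h(\bm j)}-\sqrt{(\ci_{2\h}^{\h}\bm\pi_{2\h})(\bm j)}\big)^2$, and the elementary inequality $(\sqrt a-\sqrt b)^2\le\abs{a-b}$ applied term by term yields $q_\h\le\tfrac12\norm{\bm\pi_\h-\ci_{2\h}^{\h}\bm\pi_{2\h}}_1\le C\h/\delta_\h$ by Step~2, which closes the chain of implications.

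\emph{Main obstacle.} The real work is the last move of Step~2, the bound $\norm{(I-P_\h)^{-1}\bm v}_1\le\norm{\bm v}_1/\delta_\h$ on the mean‑zero subspace, with a constant governed by $\delta_\h$ alone rather than by $\min_{\bm j}\pi_\h(\bm j)$. This is precisely where the $L_1$ coefficient‑of‑ergodicity formulation \eqref{eq: tau} and the reversibility hypothesis enter: they guarantee that the density dynamics contracts mean‑zero vectors at rate $1-\delta_\h$ and that trading $P_\h$ for its adjoint (a transpose is hidden in the stationarity conventions) is free. A minor but necessary check is that the vectors to which Step~1 is re‑applied inside Step~2, such as $P_\h\tilde{\bm\pi}_\h$, really meet the coarse‑level bounded‑variation hypothesis; beyond that the argument is bookkeeping.
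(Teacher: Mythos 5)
Your proof is correct and follows the paper's overall structure: the averaging estimate \eqref{iah} from the bounded-variation hypothesis, then the residual computation $P_\h\widehat{\bm\pi}_\h-\widehat{\bm\pi}_\h=\mathcal{O}(\h)$ via \eqref{2h2h} applied to $\bm f_\h=P_\h\ci_{2\h}^\h\bm\pi_{2\h}$ (including the necessary check that $\bm f_\h$ lies in the range of $K_\h$, which the paper uses implicitly), and finally the fidelity bound. Where you genuinely diverge is the inversion of $I-P_\h$ on the mean-zero error $\bm r=\widehat{\bm\pi}_\h-\bm\pi_\h$: the paper invokes the spectral radius of a pseudo-inverse $(I-P_\h)^{-}$ together with reversibility/diagonalizability, producing a Bauer--Fike-type constant $C=\norm{V}_1\norm{V^{-1}}_1$, whereas you use the $L_1$ ergodicity coefficient \eqref{eq: tau} directly, $\norm{P_\h\bm r}_1\le\tau_\h\norm{\bm r}_1$, and a reverse triangle inequality to get $\norm{(I-P_\h)\bm r}_1\ge(1-\tau_\h)\norm{\bm r}_1$. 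Your route is more elementary, needs neither reversibility nor the eigenvector condition number, and yields the explicit constant $C=\Lambda$; its only cost is that it ties $\delta_\h$ to $1-\tau_\h$ (the paper itself makes this identification, but if $\delta_\h$ were meant as the eigenvalue gap of a reversible chain, the paper's spectral argument would be the natural one, and the paper also offers a second alternative via Seneta's perturbation bound \eqref{seneta} that you do not use). Two smaller points in your favor: in Step 3 you correctly treat $1-\braket{\pi_\h}{\widehat\pi_\h}=\tfrac12\sum_{\bm j}\bigl(\sqrt{\pi_\h(\bm j)}-\sqrt{\widehat\pi_\h(\bm j)}\bigr)^2\le\tfrac12\norm{\bm\pi_\h-\widehat{\bm\pi}_\h}_1$ as an inequality, whereas the paper writes it as an equality (harmless here since the inequality goes the right way); and your dimensional factor $2^d$ in Step 1, which you absorb into the constant, is an honest accounting that the paper's statement of \eqref{iah} glosses over.
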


\medskip

Now we turn to the implementation of the multi-level algorithm outlined in the {\bf Diagram} \ref{diag}. For instance, for $d=1$, $\ket{\pi_{2\h}}$ has components with labels between $-1/\h$ and $1/\h$, which requires 1 less qubit to store than $\ket{\pi_{\h}}$. The interpolation operator $\ci_{2\h}^\h$ can be defined as,
\begin{equation}
\braket{x}{\pi_\h} =
  \left\{
    \begin{aligned}
            &\frac{1}{\sqrt{2}} \braket{x}{\pi_{2\h}},  \quad &\text{if} \; -\frac{1}{\h} < x< \frac{1}{\h}, \qquad \\
            &\frac{1}{\sqrt{2}}\braket{x \mp \frac{1}{\h}}{\pi_{2\h}},  \quad & \text{if} \;  \frac{1}{\h} \leq \pm x < \frac{2}{\h}.
    \end{aligned}\right.
\end{equation}
The extension to high dimensions is straightforward.

To assess the complexity of implementing each Markov chain chain, $P_\h$, we consider the quantum walk approach in  \cite[Theorem 2, $r=1$]{wocjan2008speedup}:
\begin{lemma}
For the Markov chain $P_\h$, assume that the initial density $\ket{\pi^0_\h}$ has an overlap at least $1-q$ with the stationary density $\ket{\pi_\h}$: $\braket{\pi^0}{\pi}^2\geq 1-q$, and the Markov chain has spectral gap $\delta_\h$, then after $n$ steps of the quantum walks, $W(P_\h)$, the algorithm produces an approximate density $\ket{\pi^n_\h}$ with error within $\epsilon$, 
provided that,
 \[
 n= \frac{\log \frac{1}{\epsilon}}{\sqrt{\delta} \log \frac{1}{q} } \log \left(\frac{\log \frac{1}{\epsilon}}{\sqrt{\delta} \log \frac{1}{q} }\right). 
 \]
 
\end{lemma}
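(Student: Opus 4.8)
The plan is to obtain this statement as the single‑chain ($r=1$) specialization of \cite[Theorem 2]{wocjan2008speedup}, recalling the three ingredients behind that result and checking that each holds under the Main Assumptions, so that only notation and a self‑consistent choice of precision remain. The first ingredient is the Szegedy walk. Since $P_\h$ is reversible with stationary vector $\bm\pi_\h$, the symmetrized matrix $D_\h^{1/2}P_\h D_\h^{-1/2}$ with $D_\h=\mathrm{diag}(\bm\pi_\h)$ is symmetric (this is the discriminant matrix of $P_\h$), so Szegedy's construction \cite{szegedy2004quantum,wocjan2008speedup} yields a unitary $W(P_\h)$ on $\mathbb{C}^{\abs{S_\h}}\otimes\mathbb{C}^{\abs{S_\h}}$ whose spectrum is fixed by the eigenvalues of $P_\h$. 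I would invoke the key property of this construction: the embedded stationary state, still written $\ket{\pi_\h}$, is the unique eigenvector of $W(P_\h)$ with phase $0$, and every other eigenphase $\theta$ satisfies $\abs{\theta}\geq 2\sqrt{\delta_\h}$; thus the phase gap of $W(P_\h)$ is $\Omega(\sqrt{\delta_\h})$ — the quadratic improvement over the classical rate $\delta_\h$ — and this accounts for the $1/\sqrt{\delta_\h}$ in the bound.

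The second ingredient is an approximate reflection about $\ket{\pi_\h}$. Phase estimation on $W(P_\h)$ carried out to precision $\lesssim\sqrt{\delta_\h}$ separates the phase‑$0$ eigenvector from the rest of the spectrum; amplifying the confidence of this estimate to $1-\eta$ costs a further $\CO{\log\frac1\eta}$ factor. Hence $\CO{\frac{1}{\sqrt{\delta_\h}}\log\frac1\eta}$ uses of $W(P_\h)$ implement, up to operator‑norm error $\eta$, the reflection $R_\h=2\ket{\pi_\h}\bra{\pi_\h}-I$, equivalently the two‑outcome measurement $\{\ket{\pi_\h}\bra{\pi_\h},\,I-\ket{\pi_\h}\bra{\pi_\h}\}$; this is the source of the inner logarithm.

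The third ingredient is amplification from the warm start. Writing $\ket{\pi^0_\h}=\sqrt{1-q}\,\ket{\pi_\h}+\sqrt q\,\ket{\pi_\h^\perp}$ (in the present application $\ket{\pi^0_\h}=\ci_{2\h}^\h\ket{\pi_{2\h}}$ and $q=q_\h\le C\h/\delta_\h$ by Lemma~\ref{lmm: IA}), I would run the iterated‑measurement (Zeno‑type) amplification of \cite{wocjan2008speedup} built from $R_\h$: since each round retains the phase‑$0$ component with probability at least $1-q$, driving the residual weight below the target takes $\CO{\log(1/\epsilon)/\log(1/q)}$ outer rounds. Multiplying by the per‑round cost above and choosing $\eta$ self‑consistently — small enough, $\eta\sim 1/n$, that the operator‑norm errors of the $n$ approximate reflections accumulate to less than a constant, i.e.\ $\log\frac1\eta=\CO{\log n}$ — turns $n=\CO{\frac{1}{\sqrt{\delta_\h}}\frac{\log(1/\epsilon)}{\log(1/q)}\log\frac1\eta}$ into the self‑referential estimate $n=\CO{X\log X}$ with $X=\frac{\log(1/\epsilon)}{\sqrt{\delta_\h}\log(1/q)}$, that is,
\[
n=\frac{\log\frac1\epsilon}{\sqrt{\delta_\h}\,\log\frac1q}\,\log\!\left(\frac{\log\frac1\epsilon}{\sqrt{\delta_\h}\,\log\frac1q}\right).
\]

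The step I expect to be the genuine obstacle is the error analysis inside the third ingredient: one must verify that the errors of the many \emph{approximate} reflections compose additively rather than multiplicatively, so that a per‑round precision $\eta$ that is only polynomially small in $n$ already guarantees the output is within trace distance $\epsilon$ of $\ket{\pi_\h}$, and that the amplification round count is governed by $\log(1/q)$ and not by $q$. A minor point, settled immediately, is the $2\sqrt{\delta_\h}$ phase‑gap estimate when $P_\h$ is not symmetric: it follows by applying Szegedy's analysis to $D_\h^{1/2}P_\h D_\h^{-1/2}$, whose eigenvalues coincide with those of $P_\h$.
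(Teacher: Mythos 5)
Your proposal takes the same route as the paper: the paper does not prove this lemma at all but simply imports it as the $r=1$ case of \cite[Theorem 2]{wocjan2008speedup}, which is exactly how you frame it, and your sketch of the ingredients (Szegedy walk with $\Omega(\sqrt{\delta_\h})$ phase gap via the discriminant of the reversible chain, phase-estimation-based approximate reflection costing $\CO{\frac{1}{\sqrt{\delta_\h}}\log\frac{1}{\eta}}$ walk steps, and $\CO{\log(1/\epsilon)/\log(1/q)}$ Zeno-type rounds from the warm start) is a faithful reconstruction of that cited result. The only nitpick is the constant in the phase-gap estimate (it is $\arccos(1-\delta_\h)\approx\sqrt{2\delta_\h}$ rather than $2\sqrt{\delta_\h}$), which does not affect the $\Omega(\sqrt{\delta_\h})$ scaling or the stated bound on $n$.
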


Meanwhile, we notice that our algorithm involves Markov chains with varying state spaces.
Therefore, the dependence of the complexity on the state space dimension should be taken into account. Such dependence has been quantified by Chiang et al \cite{chiang2009efficient}.
\begin{lemma}\cite[Theorem 1]{chiang2009efficient}\label{lmm:chi}
 Suppose that the transition matrix on a state space with dimension $2^m$ is $s-$ sparse, and the transition matrix can be accessed with correct $t-$bit digits: $t\geq \log \frac{1}{\epsilon} + \log s$. There is an quantum algorithm that simulates each step of the Markov chain with precision $\epsilon$ and with complexity that scales linearly with $m$ and $d$, but logarithmically with $\frac{1}{\epsilon}.$
\end{lemma}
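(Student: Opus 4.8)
\emph{Proof proposal.} The plan is to reduce the implementation of one step of the Szegedy quantum walk $W(P_\h)$ to a single ``state‑preparation'' isometry plus a constant number of elementary operations, and then to build that isometry efficiently from $s$‑sparse oracle access to $P_\h$. Write $\ket{\psi_{\bm i}}=\ket{\bm i}\otimes\sum_{\bm j}\sqrt{P_\h(\bm i,\bm j)}\,\ket{\bm j}$ and let $T$ be the isometry $\ket{\bm i}\ket{\bm 0}\mapsto\ket{\psi_{\bm i}}$ (extended to a unitary $U_T$). Then the projector onto $\mathrm{span}\{\ket{\psi_{\bm i}}\}$ is $U_T(I\otimes\ket{\bm 0}\bra{\bm 0})U_T^{\dagger}$, so the reflection $2\Pi-I$ is $U_T$, a reflection about $\ket{\bm 0}$ on the second register (a multiply‑controlled phase, $\CO{m}$ gates), and $U_T^{\dagger}$; the companion reflection is the same conjugated by the swap of the two $m$‑qubit registers. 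Hence one walk step costs $\CO{1}$ uses of $U_T,U_T^{\dagger}$, one swap, and $\CO{m}$ elementary gates, and the whole problem reduces to implementing $T$.

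\textbf{Building $T$.} Given $\ket{\bm i}$, query the sparse oracle to write into ancillas the (at most $s$) labels $\bm j_1(\bm i),\dots,\bm j_s(\bm i)$ of the nonzero entries of row $\bm i$ and their $t$‑bit values $P_\h(\bm i,\bm j_k)$. Prepare $\sum_k\sqrt{P_\h(\bm i,\bm j_k)}\,\ket{k}$ on an $\CO{\log s}$‑qubit register by the standard sequential (binary‑tree) construction: the $\ell$‑th rotation angle is $\arccos\sqrt{(\sum_{k>\ell}P)/(\sum_{k\ge\ell}P)}$, computed from the queried values by reversible fixed‑point arithmetic on $t$‑bit numbers and applied as a controlled single‑qubit rotation. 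A reversible permutation then sends $\ket{k}\mapsto\ket{\bm j_k(\bm i)}$ (this, together with manipulating the $d$‑tuple lattice indices, is where the linear dependence on $d$ and $m$ enters), and the ancillas are uncomputed by running the arithmetic and oracle queries in reverse, leaving exactly $\ket{\psi_{\bm i}}$. This uses $\CO{s}$ controlled rotations, $\mathrm{poly}(t)$ arithmetic gates, and $\CO{m}+\CO{st}$ ancilla qubits.

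\textbf{Error and resource accounting.} The only inexactness is the $2^{-t}$ truncation of the $P_\h(\bm i,\bm j_k)$. Propagating this through the partial‑sum ratios and the $\arccos\sqrt{\cdot}$ nonlinearity, each rotation angle is accurate to within $\mathrm{poly}(s)\,2^{-t}$; summing the induced amplitude errors over the $\le s$ branches and renormalizing bounds the $\ell_2$ error of the prepared state by $\CO{s\,2^{-t}}$, hence by $\epsilon$ once $t\ge\log\frac1\epsilon+\log s+\CO{1}$. Since $W(P_\h)$ calls $T$ a constant number of times, the per‑step trace‑distance error is $\CO{\epsilon}$ as well. The arithmetic costs $\mathrm{poly}(t)=\polylog\frac1\epsilon$, the rotations cost $\CO{s}$, and the reflections about $\ket{\bm 0}$ plus the index permutation cost $\CO{m}$ and $\CO{d}$, giving a total circuit size linear in $m$ and $d$ and logarithmic in $\frac1\epsilon$.

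\textbf{Main obstacle.} The delicate part is the error analysis: one must carefully track how the $2^{-t}$ truncation of each transition probability feeds through the cascade of partial‑sum ratios and the square‑root/arccos maps into the final state's $\ell_2$ (equivalently trace‑distance) error, and check that this is precisely what forces the threshold $t\ge\log\frac1\epsilon+\log s$. A secondary subtlety is ensuring that every oracle query and arithmetic scratch value is uncomputed coherently, so that $T$ is a genuine isometry rather than a channel leaving residual entanglement with the ancillas.
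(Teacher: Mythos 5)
The paper does not actually prove this lemma: it is quoted verbatim (with slightly garbled notation) from Chiang, Nagaj and Wocjan \cite{chiang2009efficient}, so there is no internal proof to compare your argument against. Your sketch is essentially a reconstruction of that reference's construction --- Szegedy's walk step as two reflections generated by the row-preparation isometry $T$, with $T$ built from the sparse oracle via Grover--Rudolph-style controlled rotations whose angles are computed by reversible arithmetic on the $t$-bit entries, then uncomputed --- so the overall route is the correct and standard one.

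Two cautions, however. First, the step you yourself flag as delicate is exactly where your accounting falls short of a proof: since $p\mapsto\sqrt{p}$ (and hence $\arccos\sqrt{\cdot}$) has unbounded derivative at the endpoints, a $2^{-t}$ truncation of a transition probability does not translate into an amplitude error of order $2^{-t}$, and the claim that each angle is accurate to within $\mathrm{poly}(s)\,2^{-t}$ is unjustified as stated. The elementary bound $\abs{\sqrt{a}-\sqrt{b}}\leq\sqrt{\abs{a-b}}$ only gives an $\ell_2$ error of order $\sqrt{s\,2^{-t}}$, which would force $t\gtrsim 2\log\frac{1}{\epsilon}+\log s$ rather than the stated threshold $t\geq\log\frac{1}{\epsilon}+\log s$; recovering the sharper threshold requires the finer fidelity analysis done in \cite{chiang2009efficient}, not generic error propagation. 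Second, your own gate count is linear in the sparsity $s$ and polylogarithmic in $\frac{1}{\epsilon}$ (through the $\mathrm{poly}(t)$ arithmetic), not merely logarithmic; this is consistent with the cited theorem --- whose ``$d$'' is in fact the sparsity, which the lemma here renames $s$ while leaving a stray $d$ in the complexity claim, whereas you reinterpret $d$ as the lattice dimension --- but the dependence on $s$ and the polylog factors should be stated explicitly rather than absorbed silently into the conclusion.
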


\emph{Overall Computational Complexity. ---} 
Recall that in order to prepare an initial state of $P_\h$ with state space $S_\h$, we run the Markov chain $P_{2\h}$, which according to \eqref{Sh}, has a state space with much smaller dimension $\abs{S_{2\h}} =2^{-d}\abs{S_\h}$. Thus, based on the above complexity estimate, running $P_{2\h}$ requires much less resources. The same pattern holds for $P_{4\h}, P_{8\h}, \cdots, P_{\h_\Max}.$ We can quantify the overall complexity as follows,

\begin{theorem}
Under the previous assumptions,   the multi-level approach (in {\bf Diagram} \ref{diag}) can be implemented via quantum walks with complexity that is equivalent to 
    \begin{equation}
        \mathcal{O} \left( {\frac{d\gamma }{d-1} \frac{1}{\sqrt{\delta_{\h_\Min}} 
        }
        } 
        \right),
    \end{equation}
    steps of  quantum walks 
    of $W\big(P_{\h_{\Min}}\big),$ excluding logarithmic factors.
\end{theorem}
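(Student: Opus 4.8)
The plan is to telescope the cost of Diagram~\ref{diag} across the $r = \log_2(\h_\Max/\h_\Min) = \CO{\log\frac1\epsilon}$ levels, charging each level a cost in units of steps of $W(P_{\h_\Min})$, and then summing a geometric series. At level $\h$ (transitioning from the interpolated state $\ci_{2\h}^{\h}\ket{\pi_{2\h}}$ to $\ket{\pi_\h}$) I will invoke Lemma~4 with $q = q_\h \leq C\h/\delta_\h$ from Lemma~\ref{lmm: IA}, and $\delta = \delta_\h$. Since the error budget $\epsilon$ is split across the $r$ levels (say $\epsilon/r$ per level, which costs only an extra $\log r = \CO{\log\log\frac1\epsilon}$ factor absorbed into the ``excluding logarithmic factors'' clause), the number of quantum-walk steps of $W(P_\h)$ needed at that level is, up to logarithmic factors, $\CO{1/(\sqrt{\delta_\h}\,\log\frac1{q_\h})}$. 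Because $q_\h \leq C\h/\delta_\h$ is bounded away from $1$ (for $\h$ small and $\delta_\h$ uniformly bounded below, which follows from the spectral-gap assumption and Lemma~2), $\log\frac1{q_\h} = \Omega(1)$, so this is $\CO{1/\sqrt{\delta_\h}}$ steps of $W(P_\h)$.

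Next I convert steps of $W(P_\h)$ into steps of $W(P_{\h_\Min})$. Two things must be tracked. First, the spectral gaps: by Lemma~2 (and the displayed computation that $\gamma = \CO{1}$) we have $\delta_\h \geq \delta_{2\h} - \CO{\h}$, hence $\delta_\h \geq c\,\delta_{\h_\Min}$ uniformly in $\h$ for some constant $c$, so $1/\sqrt{\delta_\h} = \CO{\gamma/\sqrt{\delta_{\h_\Min}}}$ — this is where the factor $\gamma$ in the theorem enters. Second, the per-step implementation cost: by Lemma~\ref{lmm:chi}, one step of $W(P_\h)$ on the state space $S_\h$ of size $2^{m_\h}$ with $m_\h = d\log_2(2/\h)$ costs $\CO{m_\h}$ (times lower-order factors), so running $P_\h$ is cheaper than $P_{\h_\Min}$ by the factor $m_\h/m_{\h_\Min} = \log_2(2/\h)/\log_2(2/\h_\Min)$. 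Writing $\h = 2^{k}\h_\Min$ for $k = 0, 1, \dots, r$ and $m_{\h_\Min} = d\log_2(2/\h_\Min) =: M$, the cost of level $\h$ in units of $W(P_{\h_\Min})$-steps is $\CO{\frac{\gamma}{\sqrt{\delta_{\h_\Min}}}\cdot\frac{M - dk}{M}}$.

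Summing over $k = 0$ to $r$ gives $\CO{\frac{\gamma}{\sqrt{\delta_{\h_\Min}}}\cdot\frac1M\sum_{k=0}^{r}(M-dk)}$. Since $M = d(r + \log_2(1/\h_\Min) - \log_2(1/\h_\Min)) $... more carefully, $M - dk = d\log_2(2/\h_\Min) - dk = d(\log_2 2 + \log_2(1/\h_\Min) - k)$ and $r \leq \log_2(1/\h_\Min)$, so each summand is positive and the sum is $\frac dM \sum_{k=0}^{r}(M/d - k) = \frac{d}{M}\cdot\Theta(r \cdot M/d) $; but the cleaner bound comes from noting $\sum_{k=0}^{r}(M - dk)/M \le \sum_{j\ge 0} (M - dj)/M$ restricted to positive terms, a geometric-like decay giving a bounded multiple of $M/(dM)\cdot\frac{M}{d}$... the point is that this finite arithmetic sum evaluates to $\CO{\frac{M}{d}\cdot\frac{1}{M}\cdot\frac{M}{d}} = \CO{M/d^2}$ only if it telescoped geometrically; since it is arithmetic it is $\Theta(r^2)$-ish unless $r \ll M/d$. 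The intended reading — and the step I flag as the main obstacle to state cleanly — is that the \emph{dominant} contribution is the coarsest levels being nearly free and the sum of the ratios $(M-dk)/M$ over the relevant range is $\CO{\frac{d}{d-1}}$ via the identity $\sum_{k\ge 1} 2^{-k}\cdot(\text{something})$; concretely, one argues that the $k$-th level's state space is a factor $2^{d}$ smaller than the $(k{-}1)$-th, and the geometric shrinkage of both the logarithmic qubit count contribution across a properly weighted accounting yields $\sum_{k\ge0} (1 - \tfrac{k d}{M}) \to \frac{d}{d-1}$ in the regime of interest. I would therefore structure the proof to make the per-level cost scale like $2^{-(d-1)k}$ relative to the finest level — combining the $2^{-dk}$ state-space shrinkage of Lemma~\ref{lmm:chi} with the at-most-$2^{k/2}$ growth in $1/\sqrt{\delta_\h}$ controlled by $\gamma$ — so that $\sum_{k=0}^{\infty} 2^{-(d-1)k} = \frac{1}{1 - 2^{-(d-1)}} = \CO{\frac{d}{d-1}}$, delivering the claimed $\CO{\frac{d\gamma}{d-1}\cdot\frac{1}{\sqrt{\delta_{\h_\Min}}}}$ after one final use of $\delta_\h \geq c\delta_{\h_\Min}$. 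The hard part is precisely pinning down that the combination of (i) per-step cost $\propto \log|S_\h|$, (ii) step count $\propto 1/\sqrt{\delta_\h}$, and (iii) the interpolation preserving overlap $1 - \CO{\h/\delta_\h}$ (Lemma~\ref{lmm: IA}) composes to a convergent geometric series rather than a divergent or merely-$\polylog$ one; everything else is bookkeeping and direct appeals to Lemmas~2, \ref{lmm: IA}, 4 and \ref{lmm:chi}.
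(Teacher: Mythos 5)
You follow the same outline as the paper---per-level walk counts from the quantum-walk mixing lemma with the overlap bound $q_\h \le C\h/\delta_\h$ of Lemma~\ref{lmm: IA}, per-step implementation cost from Lemma~\ref{lmm:chi}, then a sum over the $r$ levels---but the decisive step is missing, and you say so yourself. The paper's proof is a short accounting: it sets $C_\h = \frac{m_\h s_\h}{\sqrt{\delta_\h}\,\log\frac{1}{q_\h}}$, asserts the level-to-level ratio $C_{2\h}/C_\h = \mathcal{O}\big(\sqrt{\gamma}/d\big)$, and sums the geometric series $1 + d^{-1} + \cdots + d^{-L} \le \frac{d}{d-1}$; the entire content of the theorem (cost independent of the chain length $r$, as opposed to the $\mathcal{O}(r)$ of the generic slowly-varying-chains framework) resides in that geometric ratio. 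Your own bookkeeping, with per-step cost proportional to the qubit count $m_\h = d\log_2\frac{2}{\h}$ exactly as Lemma~\ref{lmm:chi} states, yields (with your $M=m_{\h_\Min}$ and $\h = 2^k\h_\Min$) the ratio $\frac{m_{2\h}}{m_\h}\sqrt{\frac{\delta_\h}{\delta_{2\h}}} \approx \sqrt{\gamma}\,\frac{M-d(k+1)}{M-dk}\approx\sqrt{\gamma}$, i.e.\ an arithmetic rather than geometric decay, so your sum is of order $r$---precisely the scaling the theorem is supposed to remove. You flag this honestly as ``the main obstacle,'' but flagging it does not close it.

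The repair you sketch does not work. A per-level cost scaling like $2^{-(d-1)k}$ would require one step of $W(P_\h)$ to cost an amount proportional to the state-space size $\abs{S_{2^k\h_\Min}} = 2^{-dk}\abs{S_{\h_\Min}}$, whereas Lemma~\ref{lmm:chi} gives cost linear in $m_\h = \log_2\abs{S_\h}$---the reading you used (correctly) one paragraph earlier---so your two accountings contradict each other, and only the paper's asserted $\mathcal{O}(\sqrt{\gamma}/d)$ ratio, not anything you derive, produces the $\frac{d}{d-1}$ factor. The side claim that $1/\sqrt{\delta_\h}$ grows at most like $2^{k/2}$ across levels is also unsupported: the assumption gives only $\delta_\h/\delta_{2\h}\le\gamma$, hence growth $\gamma^{k/2}$ over $k$ levels, and this is tamed not by a single factor $\gamma$ but by the sharper estimate $\delta_{2\h} \ge \delta_\h - \CO{\h}$ established after the ergodicity-coefficient lemma, whose $\CO{\h}$ perturbations telescope because $\sum_k 2^k \h_\Min \le 2\h_\Max$. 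As written, therefore, your argument proves an $\mathcal{O}\big(r/\sqrt{\delta_{\h_\Min}}\big)$ bound rather than the claimed $\mathcal{O}\big(\frac{d\gamma}{d-1}\frac{1}{\sqrt{\delta_{\h_\Min}}}\big)$; to match the paper you must justify a genuinely geometric per-level decay of the implementation cost, which neither the arithmetic accounting nor the $2^{-(d-1)k}$ fix supplies.
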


By keeping the dominate terms, the overall complexity from Lemma \eqref{lmm:chi} is given by,
\begin{equation}
    C_\h= \frac{m_\h s_\h}{ \sqrt{\delta_\h } \log \frac{1}{q_\h} }.  
\end{equation}

Due to the fact that $s_\h \geq s_{2\h}$, and $m_\h= \log \left(\frac{2}{\h}\right)^d = \mathcal{O}\left(d \log \frac{1}{\h}\right),  $  we obtain,
\begin{equation}
   \frac{ C_{2\h}} {C_\h}   = \mathcal{O}\left(\frac{\log \frac{1}{\h} }{\log \frac{1}{q_\h} }  \frac{1}{d} \sqrt{\frac{\delta_\h}{\delta_{2\h}}}\right)  =\mathcal{O}\left( \frac{1}{d} \sqrt{\frac{\delta_\h}{\delta_{2\h}}}  \right)=\mathcal{O}\left( \frac{\sqrt{\gamma}}{d}\right). 
\end{equation}
Here we have used the bound in Lemma \ref{lmm: IA} and the assumption on the spectral gap.

Therefore, the total cost is given by,
\begin{equation}
    C_\text{total}= C_{\h_{\Min}} \big(1 + d^{-1} + d^{-2} + \cdots + d^{-L} \big) \leq \frac{d\sqrt{\gamma} }{d-1} C_{\h_{\Min}}.
  \end{equation}

The remarkable observation is that for high dimensional problems, in using low-resolution Markov chains to prepare the Markov chain $P_{\h_{\Min}},$  the complexity associated with implementing the low-resolution Markov chains is almost negligible. 

\emph{Summary. } We presented a multi-level approach to mix a Markov chain with quantum speedup.  The strategy is to create a transition to the Markov chain from low-resolution, coarse-grained Markov chains.  This effectively introduces $r$ Markov chains that can be easily initialized. This fits the general framework  of using 
 a slowly-varying sequence of Markov chains \cite{aharonov2003adiabatic,wocjan2008speedup}. But  
 we show, by leveraging the multi-level properties, that overall, the complexity, excluding logarithmic factors, is independent of the chain length.  The problem has been placed in the context of approximating a Markov chain with continuous state space. But the techniques are applicable to general Markov chains that can be reduced to a Markov chains with very small state space, through a multi-level procedure. One important class of  examples are those from multigrid  methods \cite{de2010smoothed}, including large-scale graphs. The main ingredient needed are the interpolation properties of the eigenvectors of $P_\h$. Overall, this approach adds another piece to the entire puzzle associated with the generic quadratic speed of quantum algorithms for Markov chains.  

\noindent\emph{ Acknowledgement.} The author's research is supported by the National Science Foundation Grants DMS-2111221 and a seed grant from the Institute of Computational and Data Science (ICDS) at Penn State. The author would also like to thank Dr. Patrick Rall for fruitful discussions on quantum walks.

\bibliographystyle{plain}
\bibliography{qw}

\newpage 
\appendix

\section{Interpolation error in Lemma 1}

To quickly illustrate the ideas, we compare $p_h(x)$ to $p(x)$ in $\D_{\bm j}(\h)$,
\begin{align}
      p_\h(x) -p(x) &= \h^{-d} \int_{\D_{\bm j}(\h) } p(y) dy -p(x), \\
      & = \h^{-d} \int_{\D_{\bm j}(\h) } p(y) - p(x) dy. \\
      \Longrightarrow \abs{p_\h(x) -p(x)} & \leq   \h^{-d} \int_{\D_{\bm j}(\h) }  \abs{p(y) - p(x) }dy \\
      & \leq  \Lambda \h \\
       \Longrightarrow \norm{p_\h(x) -p(x)}_1 & \leq   \Lambda \h.
\end{align}

\section{Properties of the Transfer Operators in Lemma 2}

For the second inequality in \eqref{aiia}, let $\bm  u_\h=\ci_{2\h}^{\h} \bm v_{2\h}$. 
To compute the inner product with $\bm v_\h$, we examine the summation in a bin,
\begin{equation*}
    \sum_{\bm j\in N(\h)}  \bm v_\h(\bm j)   \bm u_\h(\bm j)
    = \sum_{\bm k  \in N(2\h)} \sum_{\bm j \in \D_{k}(2\h) }   \bm v_\h(\bm j)   \bm u_\h(\bm j).
\end{equation*}
Notice that $ \bm u_\h$ is a constant in $\D_{k}(2\h)$. Therefore the last step yields the sum of $\bm v_\h$ in $\D_{k}(2\h)$, which yields $\ca_{\h}^{2\h}\bm v_\h.$

 We can express the Ulam approximation of the operator $P_\h$ in \eqref{Ph} in the following compact form,
\[ P_\h = \ca^\h K \left(\ca^\h\right)^\dagger.\]
The operator $\ca^\h$ corresponds to integrating $x$ over each subdomain $\Omega_{\bm j}(\h).$ 

Similarly,
\[ P_{2\h} = \ca^{2\h} K \left(\ca^{2\h}\right)^\dagger.\]
The identify \eqref{2h2h} follows from the observation that,
\[ \ca^{2\h}  =  \ca_{\h}^{2\h} \ca^{\h}. \]

\section{Overlap of the stationary densities in Lemma 3}

The proof of the first inequality is similar to the proof of Lemma 1.

For the second part, let $\widehat{\bm \pi}_{\h}=\ci_{2\h}^{\h} \bm \pi_{2\h}.$
Next, we check the residual error 
\begin{align*}
    P_\h \widehat{\bm \pi}_{\h} - \widehat{\bm \pi}_{\h} =& \ci_{2\h}^{\h} \ca_{\h}^{2\h}  P_\h \ci_{2\h}^{\h} \bm \pi_{2\h} - \ci_{2\h}^{\h} \bm \pi_{2\h} + \CO{\h},\\ 
    =&  \ci_{2\h}^{\h}  \left(  P_{2\h} \bm \pi_{2\h} - \bm \pi_{2\h}   \right) + \CO{\h}=\CO{\h}.
\end{align*}

Here we can let $\bm f_\h = P_\h \ci_{2\h}^{\h} \bm \pi_{2\h}$ and use the  observation that in Lemma 1.
\[
\norm{\ci_{2\h}^{\h} \ca_{\h}^{2\h}  \bm f_\h - \bm  f_\h}_1 \leq \Lambda \h.
\]

Meanwhile, by using the relation $(I - P_h) \bm \pi_\h=0,$ we gave
\begin{align}\label{eq: overproof}
   \Lambda \h \geq \norm{  P_\h \widehat{\bm \pi}_{\h} - \widehat{\bm \pi}_{\h}  }_1
       = \norm{(I - P_\h)  (\widehat{\bm \pi}_{\h} - \bm \pi_\h)}_1 \Longrightarrow &
       \norm{\widehat{\bm \pi}_{\h} - \bm \pi_\h}_1 \leq C  \frac{\h}{\delta}. 
\end{align}

In the last step, we used the fact that the spectral radius of $(I-P)^{-}$ is less that $\frac{1}{1-\tau(P_\h)}$. In addition, due to the reversibility of $P_\h$, the stochastic matrix is diagonalizable, $P_\h= V D V^{-1}$. Therefore, the inequality holds with constant $C= \norm{V}_1 \norm{V^{-1}}_1. $ Therefore, this can be considered as a Bauer–Fike type of bound.

This bound can be linked to the corresponding quantum states \eqref{quan-st}:
\begin{align*}
    \braket{\pi_\h}{ \widehat{\pi}_{\h}} = 1 - \frac{1}{2}
    \big\|\ket{\pi_\h} - \ket{\widehat{\pi}_{\h}}\big\|_2^2= 1- \frac12 \norm{\bm \pi_\h -  \widehat{\bm \pi}_\h}_1,
\end{align*}
which leads to the last statement in the Lemma. 

\medskip 

An alternative approach to derive \eqref{eq: overproof} is to use the bound by Seneta  \cite{seneta1988perturbation}. 
\begin{equation}\label{seneta}
    \norm{\widehat{\bm \pi}_{\h} - \bm \pi_\h}_1 \leq \frac{1}{\delta} \norm{P_\h - \widehat{P_{\h}}}_1.
\end{equation}
Here $\widehat{P_{\h}}=  \ci_{2\h}^\h P_{2\h}\ca_{\h}^{2\h}$ can be viewed as an interpolation of $P_{2\h}$ back to the level $\h.$ To elaborate this alternative, we first look at the property of $\widehat{P_{\h}}$ and consider it as an perturbation of $P_\h.$

\begin{proposition}
$\widehat{P_{\h}}$ is a stochastic matrix, and  the Markov chain driven by $P_{2\h}$ is equivalent to $\widehat{P}_\h$ applied to states $\ci_{2\h}^{\h} \bm \pi_{2\h}$. 
\end{proposition}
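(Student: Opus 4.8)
The plan is to establish the two claims of the Proposition in sequence, both of which follow from the algebraic identities in Lemma~2, in particular $\ca_\h^{2\h}\ci_{2\h}^\h \bm v_{2\h} = 2^d \bm v_{2\h}$ and ${P}_{2\h}=\ca_{\h}^{2\h} P_{\h} \ci_{2\h}^\h$. First I would verify that $\widehat{P_\h} = \ci_{2\h}^\h P_{2\h}\ca_\h^{2\h}$ is stochastic. Nonnegativity is immediate since $\ci_{2\h}^\h$, $P_{2\h}$, and $\ca_\h^{2\h}$ all have nonnegative entries (the first two by construction, the averaging operator because it sums nonnegative PDF masses). For the row sums, I would act on the all-ones vector $\mathbf{1}_\h$: one checks $\ca_\h^{2\h}\mathbf{1}_\h$ is a scalar multiple of $\mathbf{1}_{2\h}$ (each coarse bin collects $2^d$ fine bins, so the constant is $2^d$ up to the $\h^{-d}$ normalization convention in \eqref{interp}), then $P_{2\h}$ fixes $\mathbf{1}_{2\h}$ since $P_{2\h}$ is stochastic, and finally $\ci_{2\h}^\h$ maps $\mathbf{1}_{2\h}$ back to a multiple of $\mathbf{1}_\h$. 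Tracking the normalization constants through \eqref{aiia} — specifically using $\ca_{\h}^{2\h} \ci_{2\h}^{\h} = 2^d I$ — shows the composite scaling is exactly $1$, so $\widehat{P_\h}\mathbf{1}_\h = \mathbf{1}_\h$.

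For the second claim, the key point is the intertwining relation. Starting from a state of the form $\bm u_\h = \ci_{2\h}^\h \bm \pi_{2\h}$, I would compute
\[
\widehat{P_\h}\, \bm u_\h = \ci_{2\h}^\h P_{2\h}\ca_\h^{2\h} \ci_{2\h}^\h \bm \pi_{2\h} = 2^d\, \ci_{2\h}^\h P_{2\h} \bm \pi_{2\h},
\]
using \eqref{aiia} in the last step; after absorbing the $2^d$ into the normalization of $\ci_{2\h}^\h$ (consistent with the $\frac{1}{\sqrt 2}$-type factors appearing in the explicit one-dimensional formula for $\ci_{2\h}^\h$ given in the text), this is $\ci_{2\h}^\h (P_{2\h}\bm \pi_{2\h})$. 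Hence $\widehat{P_\h}$ maps $\ci_{2\h}^\h \bm \pi_{2\h}$ to $\ci_{2\h}^\h (P_{2\h}\bm \pi_{2\h})$: the $\ci_{2\h}^\h$-image of the coarse state space is an invariant subspace of $\widehat{P_\h}$, and on it $\widehat{P_\h}$ acts as a faithful copy of $P_{2\h}$. By induction, $\widehat{P_\h}^{\,n}\ci_{2\h}^\h\bm\pi_{2\h} = \ci_{2\h}^\h P_{2\h}^{\,n}\bm\pi_{2\h}$, which is precisely the statement that running $\widehat{P_\h}$ on the interpolated coarse density reproduces the $P_{2\h}$ dynamics level by level.

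The main obstacle — really a bookkeeping obstacle rather than a conceptual one — is keeping the several normalization conventions consistent: the $\h^{-d}$ factor in the definition of $\chi_{\bm j}$ and $\ci_{2\h}^\h$, the $2^d$ in $\ca_\h^{2\h}\ci_{2\h}^\h = 2^d I$, and the $L^1$-preserving versus $L^2$-preserving versions of the interpolation (the display formula with $\tfrac{1}{\sqrt 2}$ is the $L^2$-normalized one used for quantum states, whereas Lemma~2 is stated for the $L^1$-natural one). I would fix one convention at the outset — the $L^1$-preserving $\ci_{2\h}^\h$, so that $\ci_{2\h}^\h$ maps PDFs to PDFs and $\widehat{P_\h}$ is genuinely stochastic — and carry it consistently, noting in a remark that the $L^2$ version differs only by the constant rescaling that appears in \eqref{quan-st}. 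Everything else reduces to the two identities already proved in Lemma~2.
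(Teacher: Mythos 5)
Your strategy is sound, and in one respect it goes beyond the paper: the paper's proof establishes only the stochasticity of $\widehat{P}_\h$ (nonnegativity because $\ca_\h^{2\h}$ and $\ci_{2\h}^\h$ preserve nonnegative piecewise-constant functions, and row sum one because both operators send constant functions to the same constant functions), and it never spells out the second claim of the Proposition. Your intertwining identity $\widehat{P}_\h\,\ci_{2\h}^\h\bm\pi_{2\h}=\ci_{2\h}^\h P_{2\h}\bm\pi_{2\h}$, iterated to $\widehat{P}_\h^{\,n}\,\ci_{2\h}^\h\bm\pi_{2\h}=\ci_{2\h}^\h P_{2\h}^{\,n}\bm\pi_{2\h}$, is exactly the missing argument for that part, and your row-sum test against $\mathbf{1}_\h$ is the matrix-picture analogue of the paper's function-picture argument for stochasticity.

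However, the constant-tracking is not right as written. If $\ca_\h^{2\h}\mathbf{1}_\h=a\,\mathbf{1}_{2\h}$ and $\ci_{2\h}^\h\mathbf{1}_{2\h}=b\,\mathbf{1}_\h$, then \eqref{aiia} forces $ab=2^d$, and hence $\widehat{P}_\h\mathbf{1}_\h=\ci_{2\h}^\h P_{2\h}\ca_\h^{2\h}\mathbf{1}_\h=ab\,\mathbf{1}_\h=2^d\,\mathbf{1}_\h$: using $\ca_\h^{2\h}\ci_{2\h}^\h=2^dI$, the composite scaling comes out $2^d$, not $1$, and the same stray $2^d$ appears in your intertwining step, where it is ``absorbed'' informally. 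The fix is the one you gesture at but do not carry through: commit to a convention in which $\ca_\h^{2\h}\ci_{2\h}^\h=I$ (e.g.\ $\ca$ sums masses and $\ci$ splits mass equally, the $L^1$-preserving pair, or equivalently put an explicit $2^{-d}$ into the definition of $\widehat{P}_\h$); then both the row-sum computation and the intertwining are clean identities, but you may no longer invoke \eqref{aiia} verbatim. This tension is inherited from the paper itself, whose function-space proof implicitly uses the constant-preserving convention while elsewhere asserting both $\ca_\h^{2\h}\ci_{2\h}^\h=2^dI$ and that $\ci_{2\h}^\h$ preserves the $1$-norm; still, your claim that \eqref{aiia} yields a composite factor of exactly $1$ is, as stated, incorrect and should be replaced by a single fixed normalization carried consistently through both parts.
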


\begin{proof}
For each partition, we define the collection of piecewise constant function as $\Delta_\h.$ We notice that for any non-negative function $f\in \Delta_\h$, $\ca_{\h}^{2\h} f\geq 0$. Similarly, $\ci_{2\h}^{\h} f\geq 0$ for each $f\in \Delta_{2\h}$ and $f\geq 0.$ This implies that $\int \widehat{P}_h(x,y) f(x) g(x) dx dy \geq 0 $ for all $f, g \geq 0$ in $\Delta_{\h}.$ Furthermore, for any constant 
function $f \in \Delta_\h$, $\ci_{2\h}^{\h} f$  is the same constant function, but in $\Delta_{2\h}$. The same holds true for the operator $\ca_\h^{2\h}$.
 This shows that $\widehat{P}_h=\ci_{2\h}^{\h} P_{2h} \ca_\h^{2\h}$  has row sum 1. Combining these observations, we see that $\widehat{P}_h$ is a stochastic matrix.  
 \end{proof}
 
  $\widehat{P}_\h$ can be viewed as the perturbation of $P_\h.$ In light of \eqref{2h2h}, we need to estimate,
\begin{align}
     \norm{ P_\h -  \ci_{2\h}^\h \ca_{\h}^{2\h} P_{\h} \ci_{2\h}^\h\ca_{\h}^{2\h} }_1 \leq &
 \norm{P_\h  - \ci_{2\h}^\h \ca_{\h}^{2\h} P_{\h}}_1 
 + \norm{  \ci_{2\h}^\h \ca_{\h}^{2\h} ( P_{\h} -  P_{\h} \ci_{2\h}^\h\ca_{\h}^{2\h}) }_1,\\
 \leq &
 \norm{P_\h  - \ci_{2\h}^\h \ca_{\h}^{2\h} P_{\h}}_1 
 + \norm{  P_{\h} -  P_{\h} \ci_{2\h}^\h\ca_{\h}^{2\h} }_1.
\end{align}
Using an approximation property similar to \eqref{iah}, one can show that the right hand side is $\mathcal{O}(\h).$ So together with \eqref{seneta}, one arrives at a similar bound in \eqref{eq: overproof}.

\end{document}